\def\x{{\mathbf x}}
\def\u{{\mathbf u}}
\def\v{{\mathbf v}}
\theoremstyle{theorem}
\newtheorem{theorem}{Theorem}
\newtheorem{corollary}{Corollary}
\newtheorem{lemma}{Lemma}
\theoremstyle{definition}
\newtheorem{definition}{Definition}
\newtheorem*{problem*}{Problem Statement}
\theoremstyle{remark}
\newtheorem{remark}{Remark}
\newtheorem{example}{Example}
\pgfplotsset{compat=1.18}
\def\BibTeX{{\rm B\kern-.05em{\sc i\kern-.025em b}\kern-.08em
    T\kern-.1667em\lower.7ex\hbox{E}\kern-.125emX}}
\begin{document}
\title{Control Design for Trajectory Tracking and Stabilization of Sensor LOS in an Inertially Stabilized Platform
}

\author{\IEEEauthorblockN{Abinash Agasti}
\IEEEauthorblockA{\textit{Department of Electrical Engineering} \\
\textit{Indian Institute of Technology Madras}\\
Chennai, India \\
abinashagasti@outlook.com}
\and
\IEEEauthorblockN{Angana Hazarika}
\IEEEauthorblockA{\textit{Department of Electrical Engineering} \\
\textit{National Institute of Technology Kurukshetra}\\
Kurukshetra, India \\
anganahazarika2712@gmail.com }
\and
\IEEEauthorblockN{Dr. Bharath Bhikkaji}
\IEEEauthorblockA{\textit{Department of Electrical Engineering} \\
\textit{Indian Institute of Technology Madras}\\
Chennai, India \\
bharath@ee.iitm.ac.in}
}

\maketitle

\begin{abstract}
Optical sensors are often mounted on moving platforms to aid in a variety of tasks like data collection, surveillance and navigation. This necessitates the precise control of the inertial orientation of the optical sensor line-of-sight (LOS) towards a desired stationary or mobile target. A two-axes gimbal assembly is considered to achieve this control objective which can be decomposed into two parts - stabilization and tracking. A novel state space model is proposed based on the dynamics of a two-axes gimbal system. Using a suitable change of variables, this state space model is transformed into an LTI system. Feedback linearization based control laws are proposed that achieve the desired objectives of stabilization and tracking. The effectiveness of these control laws are demonstrated via simulation in MATLAB based on a typical model of a two-axes gimbal system. 
\end{abstract}
\begin{IEEEkeywords}
feedback linearization, two-axes gimbal, output control 
\end{IEEEkeywords}
\section{Introduction}  

Optical sensors such as IR, radar and camera are often mounted on moving platforms like ground vehicles, aircraft or marine vessels to collect data, conduct surveillance or aid in navigation. Such uses can be widely observed in applications spanning military reconnaissance, agricultural monitoring, or guided weapons systems. In these applications it is imperative to precisely control the inertial orientation of the optical sensor line-of-sight (LOS) towards a desired stationary or mobile target. An inertially stabilized platform (ISP), typically achieved through gimbal assemblies, is an appropriate mechanism to achieve this desired control \cite{kennedy_2003,debruin_2008,precise_pointing_aerospace_2023}. 

An ISP comprises of three modules: an electromechanical assembly that interfaces between the optical sensor and the platform body, a control system that orients the sensor in the desired direction, and auxiliary equipment that determines the desired target location \cite{masten_2008}. The focus of this paper lies in the design of the control system for the ISP, which typically consists of two subsystems - the inner stabilization loop and the outer tracking loop \cite{hilkert_2008}. The objective of the stabilization loop is to maintain the inertial orientation of the optical sensor in order to obtain jitter-free high quality data. The inner loop obtains this objective by controlling the angular rates of the sensor LOS. Meanwhile, the objective of the outer tracking loop is to orient the sensor LOS towards the desired target. Based on the desired target location, the outer loop provides desired rate commands to the inner loop. The cascading of these subsystems allows the optical sensor LOS to be oriented towards the desired target and collect high quality jitter-free data. 

To achieve LOS stabilization and tracking, the gimbal system must counteract all torque disturbances while orienting the sensor LOS towards the desired target. The torque disturbances originate from three primary sources: platform body motion, cross-coupling disturbances, and gimbal system mass unbalance.  This necessitates a precise mathematical model for the system. Equations of motion for each gimbal axis can be derived using both Newton's second law as well as Lagrange equations. In the work by Ekstrand \cite{ekstrand_2002}, the equations were derived with the assumption of the gimbal system having no mass unbalance. Moreover, this work also explored a special case of symmetry that resulted in vastly reduced cross-coupling disturbances. Other studies have considered the effects of dynamic mass unbalance and asymmetry in the modelling of the two-axes gimbal configuration \cite{cross_coupling_2013,double_gimbal_2020,strong_coupling_2019}. 

Until recently, LOS stabilization and tracking have been accomplished using classical control methods, often variations of the well-established PID controller design
\cite{classical_pid_2021, classical_pid_2017}. Nevertheless, there have been efforts to explore advanced  modern control techniques to enhance stabilization. In \cite{lqg_stab_2006}, an LQG/LTR controller was employed for the inner stabilization loop design. In other instances, adaptive versions of the PID controller have been utilized to achieve the control objective \cite{selftuning_pid_2014, anfis_2023}. Sliding mode control has also emerged as another alternative for achieving sensor LOS stabilization \cite{sliding_mode_2020, sliding_mode_2023}. 
However, there has been limited research dealing with the control objectives of stabilization and tracking in an ISP using the methods of nonlinear analysis and control.

This paper introduces a nonlinear control approach for LOS stabilization and tracking. Unlike the conventional approach of employing separate controllers for the inner and outer loops, this work introduces a unified control law. First, a control law is introduced to achieve stabilization. This is followed by another control law that enables seamless LOS tracking without the need of an inner loop. The main contributions of this paper can be surmised in the following points:
\begin{enumerate}
    \item First, a novel state space model for the two-axes gimbal system is developed by making an appropriate choice for the state variables, assuming the gimbal to be symmetric and having no mass unbalance.
    \item The choice of state variables renders the dynamical system as a set of two decoupled Brunovsky canonical subsystems (a series of integrators, see Chapter 13 \cite{Khalil_2002}). This further allows a transformation into a linear dynamical system using a suitable change of variables.
    \item Leveraging this transformed linear system, a feedback control law is designed to facilitate exponential convergence to a desired angular velocity trajectory. Stabilization is obtained as a consequence of this result.
    \item Additionally, a feedback control law is designed that achieves exponential convergence to a desired LOS trajectory, thus achieving LOS tracking.
    \item The efficacy of these control laws is then demonstrated on a gimbal system model using MATLAB simulation.
\end{enumerate}
 
This paper is organized as follows: The comprehensive model of a two-axes gimbal system along with the control objectives are provided in Section \ref{sec:dynamics}. In Section \ref{sec:model}, a novel state space model is proposed, and a change of variables is considered which transforms the nonlinear state space model into a linear system. The nonlinear state feedback control laws for the stabilization and tracking of sensor LOS are proposed in Section \ref{sec:control_law}. The simulation of the proposed control laws has been implemented on MATLAB based on the model of a typical two-axes gimbal system in Section \ref{sec:sim}. Finally in Section \ref{sec:conc}, the conclusions are drawn and possible future directions are cited.

\section{Two-axes Gimbal Dynamics}
\label{sec:dynamics}
Consider a two-axes yaw-pitch gimbal system as shown in Figure \ref{fig:gimbal}. The outer gimbal is the yaw gimbal while the inner gimbal is the pitch gimbal. The optical sensor is placed on the inner gimbal. A rate gyro is placed on the platform to measure its angular velocity with respect to an inertial frame of reference. 
\begin{figure}[htp]
    \centering
    \includegraphics[scale=0.7]{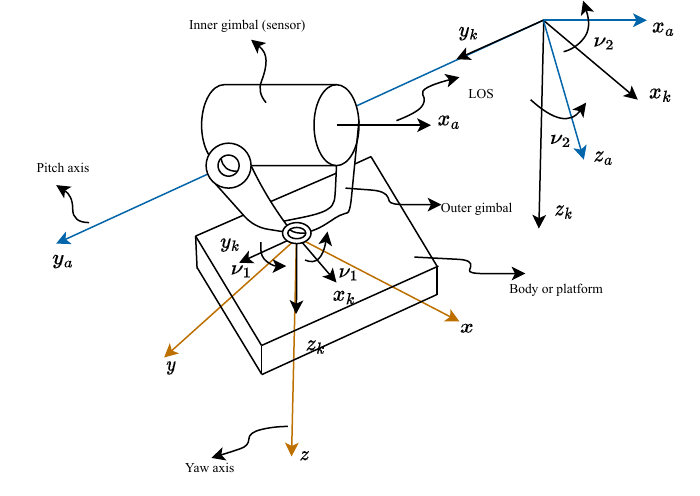}
    \caption{Two-axes gimbal system}
    \label{fig:gimbal}
\end{figure} 

Three coordinate frames are introduced as follows: frame $B$ fixed to the platform body with axes $(x,y,z)$, frame $K$ fixed to the yaw (outer) gimbal with axes $(x_k,y_k,z_k)$ and frame $A$ fixed to the pitch (inner) gimbal with axes $(x_a,y_a,z_a)$. The $x_a$ axis coincides with the sensor optical axis. The center of rotation is assumed to be at the common origin of all the coordinate frames, \textit{i.e.} the gimbals are considered to be rigid bodies with no mass unbalance.\par

The inertial angular velocity vectors of frames $B, K,$ and $A$, respectively are as follows:
\begin{equation}
    \Vec{\omega}_B=\begin{bmatrix}
         p\\
         q\\
         r
  \end{bmatrix} ;
    \Vec{\omega}_K=\begin{bmatrix}
       p_k\\
       q_k\\
       r_k
  \end{bmatrix} ;
    \vec{\omega}_A=\begin{bmatrix}
       p_a\\ 
       q_a\\ 
       r_a
  \end{bmatrix} \label{eq:ang_vel}
\end{equation}
where $p, q, r$ are the components of body angular velocities of frame $B$ in relation to inertial space about $x, y, z$ axes respectively. Similarly, $p_k, q_k, r_k$ are the yaw gimbal angular velocity components in relation to inertial space about $x_k, y_k, z_k$ axes respectively and $p_a, q_a, r_a$ are the pitch gimbal angular velocity components in relation to inertial space about $x_a, y_a, z_a$ respectively. Further, let $\theta_q(t):=\int_0^tq_a(\tau)d\tau$ and $\theta_r(t):=\int_0^tr_a(\tau)d\tau$ represent the elevation and azimuth angles of the sensor LOS respectively with respect to an inertial frame of reference. Here onwards, all the angular velocity terms and $\theta_q,\theta_r$ are used with an implicit dependence on time.\par

Now, let $\nu_1$ denote the angle of rotation about the $z$-axis to carry body-fixed frame $B$ into coincidence with the yaw gimbal frame $K$. Similarly, let $\nu_2$ denote the angle of rotation about the $y_k$-axis to carry the yaw gimbal frame $K$ into coincidence with the pitch gimbal frame $A$. As a result, the coordinate transformation matrices from frame $B$ to $K$ and $K$ to $A$ in terms of $\nu_1$ and $\nu_2$ are defined as follows:  
\begin{equation}
   R_{KB}=\begin{bmatrix}
       \cos\nu_1 & \sin\nu_1 & 0\\
       -\sin\nu_1 & \cos\nu_1 & 0\\
       0 & 0 & 1
   \end{bmatrix} 
\end{equation}
\begin{equation}
   R_{AK}=\begin{bmatrix}
       \cos\nu_2 & 0 & -\sin\nu_2\\
        0 & 1 & 0\\
        \sin\nu_2 & 0 & \cos\nu_2
    \end{bmatrix}
\end{equation} 

Utilizing the angular velocity vectors given in \eqref{eq:ang_vel}, along with the coordinate transformation matrices defined above, the following angular velocity relations can be obtained:
\begin{equation}
    \Vec\omega_K-R_{KB}\Vec\omega_B=\begin{bmatrix}
        0\\0\\\dot\nu_1
    \end{bmatrix}, \text{ and }
    \Vec\omega_A-R_{AK}\Vec\omega_K=\begin{bmatrix}
        0\\\dot\nu_2\\0
    \end{bmatrix}.
\end{equation}
Simplifying the above equation, the inertial angular velocity of frame $K$ can be written in terms of the inertial angular velocity of frame $B$ as 
\begin{equation}
  \begin{aligned}   
     & p_k= p\cos\nu_1 + q\sin\nu_1,    \\ 
     & q_k= -p\sin\nu_1 + q\cos\nu_1,   \\    
     & r_k= r + \dot{\nu_1},
  \end{aligned} \label{eq:b2k}
\end{equation}  
and the inertial angular velocity of frame $A$ can be written in terms of the inertial angular velocity of frame $K$ as
\begin{equation}
   \begin{split}
          & p_a= p_k\cos\nu_2 - r_k\sin\nu_2,  \\
          & q_a=q_k + \dot{\nu_2},   \\
          & r_a=  p_k\sin\nu_2 + r_k\cos\nu_2.
    \end{split} \label{eq:k2a}
\end{equation}  
Now, consider the inertia matrices of the pitch and yaw gimbal as \begin{equation} 
\text{Inner gimbal}:           J_A=\begin{bmatrix}
                        J_{ax} & D_{xy} & D_{xz}\\
                         D_{xy} & J_{ay} & D_{yz}\\
                         D_{xz} & D_{yz} & J_{az}
                \end{bmatrix}
\end{equation}
\begin{equation}
\text{Outer gimbal}:          J_K=\begin{bmatrix}
                    J_{kx} & d_{xy} & d_{xz}\\
                    d_{xy} & J_{ky} & d_{yz}\\
                    d_{xz} & d_{yz} & J_{kz}
                \end{bmatrix}
\end{equation}
where the diagonal terms represent the moments of inertia while the off diagonal terms represent the products of inertia.\par

The equations of motion can then be obtained either via the Newton's equations of motion or through the Lagrange equations. As shown in \cite{ekstrand_2002}, this results in two equations of motion for the two gimbal axes. Assuming the total external torque about the pitch gimbal $y_a$-axis to be $T_{y}$, the equation of motion for the pitch gimbal can be derived as
\begin{equation}
    J_{ay}\dot{q _a} = T_y + T_{D_y},\label{eq:pitch_eqn}
\end{equation}
where $T_{D_y}$ represents all the inertia disturbances that arise along the $y_a$-axis of the gimbal system. These inertia disturbance terms are functions of the inertia matrices, the angular velocity vector of frame $B$ and the angles of rotation $\nu_1$ and $\nu_2$.

A similar equation can be written for the yaw gimbal motion given by
\begin{equation}
    J_k\dot r_k=T_z+T_{D_z}, \label{eq:yaw_eqn}
\end{equation}
where $J_k$ is an inertia term dependent on the angle of rotation $\nu_2$, $T_z$ is the total external torque about the yaw gimbal $z_k$-axis, and $T_{D_z}$ contains all the inertia disturbances due to the design of the gimbal system.

Due to the inherent characteristics of the gimbal system and the coupling between the yaw and pitch gimbals, the system is susceptible to various inertia disturbances. However, system design typically aims to minimize these disturbances. One effective approach is to design a symmetric gimbal system without any mass unbalance. The symmetric assumptions are entailed by the following conditions:
\begin{equation} 
   \begin{split}
    & D_{xy} = D_{xz} = D_{yz} = 0, \\
    & J_{ax} = J_{az}, \\
    & d_{xy} = d_{xz} = d_{yz} = 0, \\ 
    & J_{kx}+J_{ax}=J_{ky}.  \\
   \end{split} \label{eq:assum}
\end{equation} 
Under the assumptions \eqref{eq:assum} we have $J_k=J_{kz}+J_{az}$, and the equations of motion for the pitch and yaw gimbals can be written as 
\begin{equation}
    \begin{aligned}
        J_{ay}\dot q_a&=T_y, \text{ and}\\
        J_k\dot r_k&=T_z-J_{ay}p_kq_a.
    \end{aligned}\label{eq:dynamics}
\end{equation}
While the assumptions in the gimbal system design \eqref{eq:assum} remove most of the identified noise sources in the system, it is possible for design errors to persist resulting in some noise entering the system through the dynamic equations in \eqref{eq:dynamics}.

Note that the external torque is provided primarily by the motors placed in each gimbal axis. With an understanding of the gimbal system's dynamics, the control objectives of stabilization and tracking can be mapped onto the system variables in the following manner:
\begin{enumerate}
    \item \textbf{Stabilization - } 
    The primary objective of LOS stabilization hinges on the isolation of torque disturbances affecting the pitch and yaw axes. This is achieved by driving the angular velocities $q_a$ and $r_a$ to zero. 
    \item \textbf{Tracking - }
    The objective of tracking refers to the sensor LOS oriented towards the desired target, while rejecting the torque disturbances. This is obtained by the sensor LOS elevation ($\theta_q$) and azimuth ($\theta_r$) angles following a desired target trajectory.
\end{enumerate}

\section {State Space Modelling} \label{sec:model}
Given the dynamic equations of a two-axes gimbal system \eqref{eq:dynamics}, consider a state space model described by a state vector $\x\in\mathcal X:= S^1\times\mathbb R\times S^1\times\mathbb R$ and a control vector $\u\in\mathbb R^2$ given by 
\begin{equation}
   \x=\begin{bmatrix}
       x_1\\
       x_2\\
       x_3\\
       x_4
   \end{bmatrix}  =\begin{bmatrix}
             \nu_2 \\
             \dot{\nu_2}\\
             \nu_1\\
             \dot{\nu_1}
    \end{bmatrix},\quad 
    \u=\begin{bmatrix}
        u_1\\ u_2
    \end{bmatrix}=\begin{bmatrix}
        T_{y}\\ T_{z}
    \end{bmatrix}.
\end{equation}
Clearly, $\dot x_1=x_2$ and $\dot x_3=x_4$. From \eqref{eq:k2a}, $x_2=\dot\nu_2=q_a-q_k$, and using \eqref{eq:dynamics}, we have
\begin{equation}
    \dot x_2=\frac{u_1}{J_{ay}}+f_1(t,\x), \label{eq:x2_dyn}
\end{equation}
where $f_1(t,\x)=\dot{p}\sin{x_3}+ x_4p\cos{x_3}- \dot{q}\cos{x_3}+ x_4q\sin{x_3}$ is obtained by differentiating \eqref{eq:b2k}. Similarly, from \eqref{eq:b2k}, $x_4=\dot\nu_1=r_k-r$ and using \eqref{eq:dynamics}, we have
\begin{equation}
    \dot x_4=\frac{u_2}{J_k}+f_2(t,\x). \label{eq:x4_dyn}
\end{equation}
where $f_2(t,\x)=-\dot r-\frac{J_{ay}}{J_k}p_kq_a=-\dot r-\frac{J_{ay}}{J_k}(p\cos x_3+q\sin x_3)(-p\sin x_3+q\cos x_3+x_2)$ is obtained by differentiating \eqref{eq:k2a}. Thus the state space model is represented by the dynamics equations 
\begin{equation}
    \dot\x=\begin{bmatrix}
        x_2\\
        \frac{u_1}{J_{ay}}+f_1(t,\x)\\
        x_4\\
        \frac{u_2}{J_k}+f_2(t,\x)
    \end{bmatrix}.\label{eq:ssm}
\end{equation}
In this state space representation of the system, the body fixed angular velocities of the platform, being measured from the rate gyro placed on the platform, are assumed to be known quantities and are thus considered as functions of time. Thus, the state dynamics can be written as a function of time, states and controls.
\begin{remark} \label{rem:noise}
Under the assumptions outlined in \eqref{eq:assum}, significant reductions in inertial disturbances are evident. However, the products of inertia terms may not be entirely eliminated due to potential design errors. One effective way to circumvent this issue is by modifying the functions $f_1$ and $f_2$ to accommodate the inertia disturbance terms that arise due to products of inertia on the RHS of the dynamics equations \eqref{eq:x2_dyn} and \eqref{eq:x4_dyn}. However, this has been avoided to keep the expressions concise as it does not aid in any conceptual development. In the case that the products of inertia terms are relatively insignificant to the moments of inertia, which is often the scenario, these disturbances can be treated as noise, akin to other sources of mechanical noise in the system. In Section \ref{sec:sim}, the results of the proposed control laws are demonstrated in the presence of noise.
\end{remark}

Before looking at the feedback control law to obtain the control objectives of stabilization and tracking, first a transformation of the original system is considered to simplify the dynamics. These transformation is given in the next result where a change of variables is studied that transforms the nonlinear state space model \eqref{eq:ssm} into a linear time-invariant (LTI) system. 
  
\begin{lemma}
    \label{lem:transformation}
    The controls 
\begin{equation}
   \begin{split}
    u_1&=J_{ay}[v_1-f_1(t,\x)] \\
    u_2&=J_k[v_2-f_2(t,\x)] \\   \label{eq:lin_control}
   \end{split}
\end{equation}
are well defined and transform the state space model considered in \eqref{eq:ssm} into a linear dynamical system given by 
\begin{equation}
    \dot\x=A\x+B\v, \label{eq:lin_system}
\end{equation}
where 
\begin{equation}
A=\begin{bmatrix}
    0 & 1 & 0 & 0\\
    0 & 0 & 0 & 0\\
    0 & 0 & 0 & 1\\
    0 & 0 & 0 & 0 
\end{bmatrix},\ B=\begin{bmatrix}
    0 & 0\\
    1 & 0\\
    0 & 0\\
    0 & 1
\end{bmatrix} \text{ and } \v=\begin{bmatrix}
    v_1\\ v_2
\end{bmatrix}. \label{eq:LTI}
\end{equation}
\end{lemma}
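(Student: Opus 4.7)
The plan is to verify Lemma~\ref{lem:transformation} as a direct feedback linearization check: substitute the proposed controls into the state dynamics \eqref{eq:ssm} and confirm componentwise that the closed-loop vector field equals $A\x + B\v$. Since the two nonlinear components $\dot x_2$ and $\dot x_4$ in \eqref{eq:ssm} contain the disturbance-like terms $f_1(t,\x)$ and $f_2(t,\x)$ in an additive fashion, and the control enters linearly with nonzero gains $1/J_{ay}$ and $1/J_k$, the proposed $u_1, u_2$ are precisely the standard choice that cancels the drift and rescales the input to unity.

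First I would address well-definedness. Because $J_{ay}$ and $J_k = J_{kz}+J_{az}$ are physical moments of inertia of rigid bodies, both are strictly positive and thus bounded away from zero on all of $\mathcal{X}$; consequently the scalings in \eqref{eq:lin_control} are finite. The functions $f_1(t,\x)$ and $f_2(t,\x)$ were explicitly derived following \eqref{eq:x2_dyn} and \eqref{eq:x4_dyn} as smooth functions of $t$ (through $p,q,r$ and their derivatives, which are measured via the rate gyro) and of $\x$, so the maps $(t,\x,\v)\mapsto u_1,u_2$ are continuous on $\mathbb{R}\times\mathcal{X}\times\mathbb{R}^2$.

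Next I would perform the substitution. Inserting $u_1 = J_{ay}[v_1 - f_1(t,\x)]$ into the second row of \eqref{eq:ssm} gives
\begin{equation*}
\dot x_2 = \frac{J_{ay}[v_1 - f_1(t,\x)]}{J_{ay}} + f_1(t,\x) = v_1,
\end{equation*}
and inserting $u_2 = J_k[v_2 - f_2(t,\x)]$ into the fourth row yields analogously $\dot x_4 = v_2$. The first and third rows of \eqref{eq:ssm} already read $\dot x_1 = x_2$ and $\dot x_3 = x_4$, independently of the control. Assembling these four equations in matrix form directly produces $\dot\x = A\x + B\v$ with $A$ and $B$ exactly as in \eqref{eq:LTI}.

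Honestly, there is no real obstacle here: the lemma is a verification, and the cancellation is built into the definition of the controls. The only subtlety worth flagging in the write-up is that $f_1, f_2$ depend on $t$ through the measured platform rates $p,q,r$ and their derivatives, so the transformed system is time-invariant only in the sense of its $(\x,\v)$-dependence, while the control law itself is time-varying. I would make this point explicit so that the subsequent use of LTI design tools on \eqref{eq:lin_system} is justified.
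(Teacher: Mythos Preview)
Your proposal is correct and follows essentially the same direct-substitution approach as the paper: plug \eqref{eq:lin_control} into \eqref{eq:ssm}, cancel the $f_i$ terms in the second and fourth rows, and read off the LTI form \eqref{eq:LTI}. The paper additionally writes out the inverse map $\v \mapsto \u$ to emphasize the equivalence of the two representations, which you could include for completeness; conversely, your explicit treatment of well-definedness (positivity of $J_{ay}$ and $J_k$) is something the paper's proof leaves implicit despite the lemma statement asserting it.
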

\begin{proof}
Consider the controls $\u$ as proposed in \eqref{eq:lin_control}. Using these inputs in the state space model, we have the resulting dynamics given by 
\begin{equation}
\begin{aligned}
    \dot\x&=\begin{bmatrix}
        x_2\\
        \frac{J_{ay}[v_1-f_1(t,\x)]}{J_{a_y}}+f_1(t,x)\\
        x_4\\
        \frac{J_k[v_2-f_2(t,\x)]}{J_k}+f_2(t,x)
    \end{bmatrix},\\
    &\quad=\begin{bmatrix}
        x_2\\
        v_1\\
        x_4\\
        v_2
    \end{bmatrix}.
\end{aligned}
\end{equation}
Thus, the resulting dynamical system can be written as an LTI system with the controls $\v$ and the $A$, $B$ matrices specified by \eqref{eq:LTI}. Similarly, taking the inverse of the transformation considered in \eqref{eq:lin_control} results in 
\begin{equation}
    \begin{aligned}
        v_1&=\frac{u_1}{J_{ay}}+f_1(t,\x),\\
        v_2&=\frac{u_2}{J_k}+f_2(t,\x). \label{eq:inv_trans}
    \end{aligned}   
\end{equation}
Applying these controls to the LTI system \eqref{eq:lin_system}, results in the original state space system \eqref{eq:ssm}.

\end{proof}
\begin{remark}
\label{rem:transformation}
    In Lemma \ref{lem:transformation}, it is shown that the original state space model \eqref{eq:ssm} is equivalent to an LTI system given by \eqref{eq:lin_system}. Thus, any behaviour that can be obtained in the LTI system \eqref{eq:lin_system} can be reproduced in the original model \eqref{eq:ssm} only via a change of variables \eqref{eq:lin_control}. 
\end{remark}

\section{Feedback Control Law}
\label{sec:control_law}
In this section, feedback control laws are proposed to achieve the control objectives of stabilization and tracking. To this end, first consider the definition of two functions.
\begin{definition}
    Consider the functions $g_1,g_2:\mathbb R^+\times\mathcal{X}\rightarrow\mathbb R$ defined by \label{def:g_funct}
    \begin{equation}
        \begin{aligned}
            g_1(t,\x)&=-\dot{p}\sin{x_3}-x_4p\cos{x_3}\\
            &\quad+\dot{q}\cos{x_3}- x_4q\sin{x_3},\text{ and }  \\
            g_2(t,\x)&=(\dot{p}\cos{x_3}-x_4p\sin{x_3}+\dot{q}\sin{x_3}\\
            &+x_4q\cos{x_3})\sin{x_1} + (p\cos{x_3}\\
            &+q\sin{x_3})x_2\cos{x_1}-x_2r\sin{x_1}\\
            &-x_2x_4\sin{x_1}+\dot{r}\cos{x_1}.
        \end{aligned} \label{eq:g_func}
    \end{equation}
    Here onwards the usage of the terms $g_1$ and $g_2$ indicate the functions defined in \eqref{eq:g_func} with an implicit dependence on time and the states. 
\end{definition}
The next result provides a control law that enables the sensor  pitch ($q_a$) and yaw ($r_a$) angular velocities to follow any twice differentiable desired rate trajectories $q_a^d(t)$ and $r_a^d(t)$ respectively.
\begin{theorem}
    Consider the LTI system given by \eqref{eq:lin_system}, \eqref{eq:LTI}. Let $q_a^d(t)$ and $r_a^d(t)$ be once differentiable desired trajectories and let $c_1,\ c_2$ be positive real quantities, then the controls 
    \begin{equation}
        \begin{aligned}
            v_1&=-g_1+\dot q_a^d+c_1(q_a^d-q_a),\\
            v_2&=\frac{-g_2+\dot r_a^d+c_2(r_a^d-r_a)}{\cos x_1},   \label{eq:stab_con}
        \end{aligned}
    \end{equation}
    where $g_1$ and $g_2$ are as introduced in Definition \ref{def:g_funct}, enable the sensor angular velocities $q_a$ and $r_a$ to converge exponentially to the desired rate trajectories $q_a^d$ and $r_a^d$, with decay rates $-c_1$ and $-c_2$ respectively.
    \label{thm:stab}
\end{theorem}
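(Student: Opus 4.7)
The plan is to express the time derivatives of the sensor angular velocities $q_a$ and $r_a$ directly in terms of the linearizing inputs $v_1, v_2$ from Lemma~\ref{lem:transformation}, and then read off first-order error dynamics whose solution is manifestly an exponential decay.

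First, I would rewrite $q_a$ and $r_a$ in state-space coordinates using the kinematic relations \eqref{eq:b2k} and \eqref{eq:k2a}:
\begin{equation}
q_a = -p\sin x_3 + q\cos x_3 + x_2, \qquad r_a = (p\cos x_3 + q\sin x_3)\sin x_1 + (r+x_4)\cos x_1.
\end{equation}
Differentiating the first expression in time and using $\dot x_2 = v_1$ from \eqref{eq:lin_system}, the terms involving the known body rates and $x_3,x_4$ assemble precisely into the function $g_1$ of Definition~\ref{def:g_funct}, giving $\dot q_a = g_1 + v_1$. Similarly, differentiating the expression for $r_a$, applying the chain rule to $\sin x_1$ and $\cos x_1$ (which brings in factors of $x_2$), and using $\dot x_4 = v_2$, all but one term collect exactly into $g_2$, leaving $\dot r_a = g_2 + v_2\cos x_1$. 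This bookkeeping is the only genuinely computational step; the verification that the pieces match $g_1$ and $g_2$ as stated in \eqref{eq:g_func} is the main (though routine) obstacle, and it is the reason the $g_i$ were defined in that particular form.

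Next, I would substitute the proposed controls \eqref{eq:stab_con}. For the pitch channel,
\begin{equation}
\dot q_a = g_1 + v_1 = g_1 + \bigl(-g_1 + \dot q_a^d + c_1(q_a^d - q_a)\bigr) = \dot q_a^d + c_1(q_a^d - q_a).
\end{equation}
Defining the tracking error $e_1 := q_a^d - q_a$, this yields $\dot e_1 = -c_1 e_1$, so $e_1(t) = e_1(0)e^{-c_1 t}$. For the yaw channel, substituting $v_2$ gives $\dot r_a = g_2 + \cos x_1 \cdot \frac{-g_2 + \dot r_a^d + c_2(r_a^d - r_a)}{\cos x_1} = \dot r_a^d + c_2(r_a^d - r_a)$, so $e_2 := r_a^d - r_a$ obeys $\dot e_2 = -c_2 e_2$ and decays as $e_2(0) e^{-c_2 t}$. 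Positivity of $c_1, c_2$ yields exponential convergence with the claimed decay rates.

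I would close with a brief note on the well-posedness of $v_2$: the division by $\cos x_1$ is valid provided $x_1 = \nu_2 \neq \pm\pi/2$, which is a standard gimbal-lock caveat and is implicit in the state manifold $\mathcal{X}$ on which the problem is posed. Apart from this singularity, the entire argument is a direct computation once $\dot q_a$ and $\dot r_a$ have been written in the form $g_i +$ (input), so the feedback linearization collapses the problem to two scalar linear ODEs for the errors.
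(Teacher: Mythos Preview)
Your proposal is correct and follows essentially the same route as the paper: express $q_a$ and $r_a$ in the state variables via \eqref{eq:b2k}--\eqref{eq:k2a}, differentiate to obtain $\dot q_a = g_1 + v_1$ and $\dot r_a = g_2 + v_2\cos x_1$, substitute \eqref{eq:stab_con}, and read off the first-order error dynamics $\dot e + c\,e = 0$. Your closing remark on the $\cos x_1$ singularity is not part of the paper's proof but is addressed separately in Remark~\ref{rem:singularity}.
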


\begin{proof}
    Consider the linear state space model \eqref{eq:lin_system}, \eqref{eq:LTI}, then the  angular velocities $q_a$ and $r_a$ can be written as functions of time and states from the equations \eqref{eq:b2k} and $\eqref{eq:k2a}$ as follows:
    \begin{equation}
        \begin{aligned}
            q_a&=-p\sin x_3+q\cos x_3+x_2,\\
            r_a&=p_k\sin x_1+r_k\cos x_1\\
               &=p\cos x_3\sin x_1+q\sin x_3\sin x_1\\
               &\quad+r\cos x_1+x_4\cos x_1.
        \end{aligned}        
    \end{equation}
    Differentiating $q_a$ we get
    \begin{equation}
   \begin{split}
      \dot{q}_a &=-\dot{p}\sin{x_3}-x_4p\cos{x_3} \\&
                            +\dot{q}\cos{x_3}- x_4q\sin{x_3}+\dot{x_2} \\&
                          = g_1(t,\x) + \dot{x}_2\\
                          &=g_1(t,\x)+v_1,
   \end{split} 
\end{equation}
    while differentiating $r_a$ we get 
    \begin{equation}
    \begin{split}
        \dot{r}_a =& (\dot{p}\cos{x_3}-x_4p\sin{x_3}+
                    \dot{q}\sin{x_3} \\&
                    +x_4q\cos{x_3})\sin{x_1} + (p\cos{x_3} \\&+q\sin{x_3})x_2\cos{x_1}
                    -x_2r\sin{x_1} \\&-x_2x_4\sin{x_1}
                    +\dot{r}\cos{x_1}+\dot{x}_4\cos{x_1} \\
                           &= g_2(t,\x) +\dot{x}_4\cos{x_1} \\
                           &= g_2(t,\x)+v_2\cos x_1.
    \end{split}
\end{equation}
    Plugging in the controls proposed in \eqref{eq:stab_con}, we get the output dynamics as 
    \begin{equation}
        \begin{aligned}
            \dot q_a=\dot q_a^d+c_1(q_a^d-q_a),\\ 
            \dot r_a=\dot r_a^d+c_2(r_a^d-r_a).
        \end{aligned}
    \end{equation}
    Then the errors of the actual angular velocity with the desired trajectories $e_q:=q_a^d-q_a$ and $e_r:=r_a^d-r_a$ follow the following first order dynamics:
    \begin{equation}
        \begin{aligned}
            \dot e_q+c_1e_q=0,\\
            \dot e_r+c_2e_r=0.
        \end{aligned}
    \end{equation}
    By any choice of the control parameters $c_1$ and $c_2$ as positive real values leads to the error dynamics being exponentially stable. Thus, the angular velocities $q_a$ and $r_a$ converge to the desired rate trajectories exponentially fast. 
\end{proof}
\begin{remark}
\label{rem:parameter_choice}
    While the choice of control parameters $c_1$ and $c_2$ may seem arbitrary, it does have a physical interpretation. As shown in Theorem \ref{thm:stab}, these control parameters denote the decay rate of the angular velocity to the desired trajectory. Thus, the choice for these parameters can be made as per the required decay rate considering a trade-off on the demand required on the control input.
\end{remark}
An immediate corollary following Theorem \ref{thm:stab} is that the control law necessary for obtaining LOS stabilization is a special case of the controls given in \eqref{eq:stab_con}. 
\begin{corollary}
    Consider the LTI system given by \eqref{eq:lin_system}, \eqref{eq:LTI}. Let $c_1$ and $c_2$ be positive real quantities, then the controls
    \begin{equation}
        \begin{aligned}
            v_1=-g_1-c_1q_a,\\
            v_2=\frac{-g_2-c_2r_a}{\cos x_1}, \label{eq:stab_cor_con}
        \end{aligned}
    \end{equation}
    where $g_1$ and $g_2$ are as introduced in Definition \ref{def:g_funct}, achieve LOS stabilization by driving the sensor angular velocities $q_a$ and $r_a$ to zero at an exponential rate of $-c_1$ and $-c_2$ respectively. \label{cor:stab}
\end{corollary}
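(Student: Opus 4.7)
The plan is to derive this corollary as an immediate specialization of Theorem \ref{thm:stab}. Stabilization of the LOS amounts to driving the pitch and yaw angular velocities $q_a$ and $r_a$ to zero, so I would choose the desired rate trajectories $q_a^d(t) \equiv 0$ and $r_a^d(t) \equiv 0$. These are trivially once differentiable (in fact smooth), so the hypotheses of Theorem \ref{thm:stab} are satisfied for any positive $c_1, c_2$.

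Next, I would substitute $q_a^d \equiv 0$, $r_a^d \equiv 0$, and hence $\dot q_a^d \equiv 0$, $\dot r_a^d \equiv 0$, into the control law \eqref{eq:stab_con}. The first component becomes $v_1 = -g_1 + 0 + c_1(0 - q_a) = -g_1 - c_1 q_a$, and the second component becomes $v_2 = \bigl(-g_2 + 0 + c_2(0 - r_a)\bigr)/\cos x_1 = (-g_2 - c_2 r_a)/\cos x_1$, which exactly matches the controls proposed in \eqref{eq:stab_cor_con}.

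With this specialization, the tracking error signals reduce to $e_q = -q_a$ and $e_r = -r_a$, and Theorem \ref{thm:stab} guarantees that $\dot e_q + c_1 e_q = 0$ and $\dot e_r + c_2 e_r = 0$. Since $c_1, c_2 > 0$, both errors, and consequently $q_a$ and $r_a$ themselves, decay to zero exponentially with rates $-c_1$ and $-c_2$ respectively. This is precisely the claim of the corollary.

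The only point requiring any care is the implicit assumption that $\cos x_1 \neq 0$, so that $v_2$ in \eqref{eq:stab_cor_con} is well defined; this is inherited directly from Theorem \ref{thm:stab} and corresponds physically to avoiding the pitch gimbal lock configuration $x_1 = \nu_2 = \pm \pi/2$. No further work is needed, since all the heavy lifting, namely the differentiation of $q_a$ and $r_a$ along the trajectories of the linearized system \eqref{eq:lin_system} and the verification of the exponential error dynamics, has already been carried out in the proof of Theorem \ref{thm:stab}.
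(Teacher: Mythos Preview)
Your proposal is correct and follows exactly the paper's own approach: the corollary is obtained simply by substituting the desired trajectories $q_a^d(t)\equiv 0$ and $r_a^d(t)\equiv 0$ into Theorem~\ref{thm:stab}. Your additional remarks on the error dynamics and the $\cos x_1\neq 0$ caveat are consistent with the paper but more detailed than its one-line proof.
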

\begin{proof}
    This result is obtained by substituting the desired trajectories $q_a^d(t)$ and $r_a^d(t)$ as identically equal to zero in Theorem \ref{thm:stab}. 
\end{proof}
Following the design of control laws that achieve stabilization of the sensor LOS exponentially fast, the next result proposes a control law that achieves LOS tracking.
\begin{theorem}
    \label{thm:track}
    Consider the LTI system given by \eqref{eq:lin_system}, \eqref{eq:LTI}. Let $\theta_q^d(t)$ and $\theta_r^d(t)$ be twice-differentiable desired trajectory of the elevation ($\theta_q$) and the azimuth ($\theta_r$) angles of the sensor LOS. Let $c_1,\ c_2,\ c_3,\ c_4$ be positive real quantities, then consider the controls 
    \begin{equation}
        \begin{aligned}
            v_1&=-g_1(t,\x)+\ddot\theta_q^d+ c_1(\dot\theta_q^d-\dot\theta_q)+c_2(\theta_q^d-\theta_q), \\
            v_2&=\frac{1}{\cos x_1}\left(-g_2(t,\x)+\ddot\theta_r^d \right.\\
            &\qquad\qquad\left.+c_3(\dot\theta_r^d-\dot\theta_r)+c_4(\theta_r^d-\theta_r)\right), \label{eq:track_con}
        \end{aligned}
    \end{equation}
    where $g_1$ and $g_2$ are as introduced in Definition \ref{def:g_funct}. Then the elevation ($\theta_q$) and azimuth ($\theta_r$) angles of the sensor LOS converge exponentially to the desired trajectories $\theta_q^d(t)$ and $\theta_r^d(t)$ respectively.
\end{theorem}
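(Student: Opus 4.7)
The plan is to reduce the LOS tracking problem to a standard second-order error-dynamics argument by exploiting the fact that, by definition, $\theta_q$ and $\theta_r$ are antiderivatives of $q_a$ and $r_a$. This lets us reuse almost verbatim the key intermediate identities already established in the proof of Theorem~\ref{thm:stab}, with the only new ingredient being a second integration.

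First I would note that since $\theta_q(t)=\int_0^t q_a(\tau)\,d\tau$ and $\theta_r(t)=\int_0^t r_a(\tau)\,d\tau$, we have $\dot\theta_q=q_a$, $\dot\theta_r=r_a$, and therefore $\ddot\theta_q=\dot q_a$, $\ddot\theta_r=\dot r_a$. From the differentiation step already carried out in the proof of Theorem~\ref{thm:stab}, we have the identities $\dot q_a=g_1(t,\x)+v_1$ and $\dot r_a=g_2(t,\x)+v_2\cos x_1$. I would then substitute the proposed controls \eqref{eq:track_con} into these two identities: the factor of $1/\cos x_1$ in $v_2$ is designed precisely so that the $\cos x_1$ in $\dot r_a$ cancels, and the $-g_1$, $-g_2$ terms kill the corresponding $g_i$. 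This leaves
\begin{equation*}
\ddot\theta_q=\ddot\theta_q^d+c_1(\dot\theta_q^d-\dot\theta_q)+c_2(\theta_q^d-\theta_q),
\end{equation*}
together with the analogous identity for $\theta_r$ involving $c_3, c_4$.

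Next I would introduce the tracking errors $e_q:=\theta_q^d-\theta_q$ and $e_r:=\theta_r^d-\theta_r$ and rearrange the two equations above to obtain the homogeneous second-order linear error dynamics
\begin{equation*}
\ddot e_q+c_1\dot e_q+c_2 e_q=0,\qquad \ddot e_r+c_3\dot e_r+c_4 e_r=0.
\end{equation*}
The associated characteristic polynomials $s^2+c_1 s+c_2$ and $s^2+c_3 s+c_4$ are Hurwitz precisely when all four coefficients are strictly positive, which is exactly the hypothesis on the $c_i$. Hence both $e_q$ and $e_r$ decay exponentially, which yields the claimed exponential convergence of $\theta_q$ and $\theta_r$ to $\theta_q^d$ and $\theta_r^d$.

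The argument is largely a routine extension of Theorem~\ref{thm:stab}, so I do not expect a genuine obstacle. The two items requiring a small amount of care are (i) correctly carrying the $\cos x_1$ factor through the cancellation in the azimuth channel, and (ii) explicitly noting that the twice-differentiability hypothesis on $\theta_q^d, \theta_r^d$ is needed so that $\ddot\theta_q^d$ and $\ddot\theta_r^d$ appearing in the feedforward term of \eqref{eq:track_con} are well defined. No further assumption on the desired trajectories or on the body rates beyond what is already in force for the state-space model is required.
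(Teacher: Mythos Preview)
Your proposal is correct and follows essentially the same route as the paper: both reduce to the identities $\dot q_a=g_1+v_1$ and $\dot r_a=g_2+v_2\cos x_1$ from Theorem~\ref{thm:stab}, substitute \eqref{eq:track_con}, and read off the homogeneous second-order error dynamics $\ddot e_q+c_1\dot e_q+c_2 e_q=0$ and $\ddot e_r+c_3\dot e_r+c_4 e_r=0$. Your explicit appeal to the Hurwitz condition on $s^2+c_i s+c_j$ and your remark on the need for twice-differentiability of the desired trajectories are slight sharpenings of the paper's argument, but the structure is identical.
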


\begin{proof}
    Consider the linear state space model \eqref{eq:lin_system}, \eqref{eq:LTI}. The double derivatives of the elevation and azimuth angles of the sensor LOS can be written as $\ddot\theta_q=\dot q_a$ and $\ddot\theta_r=\dot r_a$. From the proof of Theorem \ref{thm:stab}, it is clear that $\dot q_a=g_1+v_1$ and $\dot r_a=g_2+v_2\cos x_1$. Now plugging in the controls proposed in \eqref{eq:track_con}, we get the second order dynamics of the outputs as
    \begin{equation}
        \begin{aligned}
            \ddot\theta_q&=\ddot\theta_q^d+ c_1(\dot\theta_q^d-\dot\theta_q)+c_2(\theta_q^d-\theta_q),\text{ and }\\
            \ddot\theta_r&=\ddot\theta_r^d+ c_3(\dot\theta_r^d-\dot\theta_r)+c_4(\theta_r^d-\theta_r).
        \end{aligned}
    \end{equation}
    Denote $e_q:=\theta_q^d-\theta_q$ and $e_r:=\theta_r^d-\theta_r$. Then from the equations above, it is evident that the errors of $\theta_q$ and $\theta_r$ from their respective desired trajectories follow a second order dynamics given by 
    \begin{equation}
        \begin{aligned}
            &\ddot{e}_q+c_1\dot{e}_q+c_2e_q=0 \\
            &\ddot{e}_r+c_3\dot{e}_r+c_4e_r=0
        \end{aligned}
    \end{equation}
    By a suitable choice of the control parameters $c_1,c_2,c_3$ and $c_4$, the decay rate of the errors can be desirably controlled. Thus, this control law drives the sensor LOS towards its desired trajectory and can do so fast enough by making a large enough choice of the control parameters.
\end{proof}
\begin{remark}
\label{rem:stab}
    As mentioned in Remark \ref{rem:transformation}, the behaviour observed in the LTI system \eqref{eq:lin_system} using the controls \eqref{eq:stab_con} or \eqref{eq:track_con}, can be reproduced in the original state space model \eqref{eq:ssm} by using controls $\u$ as given by \eqref{eq:lin_control}. Further, the choice of control parameters can be made by considering a trade-off between the desired decay rate and consequent demands on control input as already discussed in Remark \ref{rem:parameter_choice}.
\end{remark}
\begin{remark}
\label{rem:singularity}
    Note that, the control $v_2$ cannot directly affect $r_a$ or consequently $\theta_r$. This control is achieved through the control of $r_k$ which then affects $r_a$ through a gain of $\cos x_1$. Hence, in both the stabilizing \eqref{eq:stab_con}, \eqref{eq:stab_cor_con} and the tracking \eqref{eq:track_con} controls, there is a $\cos x_1$ term in the denominator for the expression of $v_2$. Thus, as the loop gain ($\cos x_1$) approaches zero, the demand on the control $v_2$ blows up which appears analytically as the cosine term in the denominator.
\end{remark}

\section{Simulations}
\label{sec:sim}
The proposed state space model along with the feedback control laws have been implemented in MATLAB R2023a. The two-axes gimbal system considered is assumed with the following inertia matrices:
\begin{equation} 
         J_A=\begin{bmatrix}
                        0.003 & 0 & 0\\
                        0 & 0.008 & 0\\
                        0 & 0 & 0.003
                \end{bmatrix}
\end{equation}
\begin{equation}
          J_K=\begin{bmatrix}
                    0.003 & 0 & 0\\
                    0& 0.006 & 0\\
                    0 & 0 & 0.0003
                \end{bmatrix}
\end{equation}
Note that these inertia matrices adhere to the assumptions of symmetric design. 
The roll, pitch and yaw angular velocities of the platform are assumed to be sinusoidally varying curves given by $p(t)= 0.1\sin(\frac{\pi}{15}t),\ q(t)=0.1\sin(\frac{\pi}{20}t),\text{ and } r(t)=0.2\sin(\frac{\pi}{15}t) $. This can be seen in Figure \ref{fig:platform_motion}. The cosine term in the denominator of the control $v_2$ has been dealt with a saturation block by restricting its value beyond a threshold around zero.
\begin{figure}[h]
    \centering
    \includegraphics[scale=0.8]{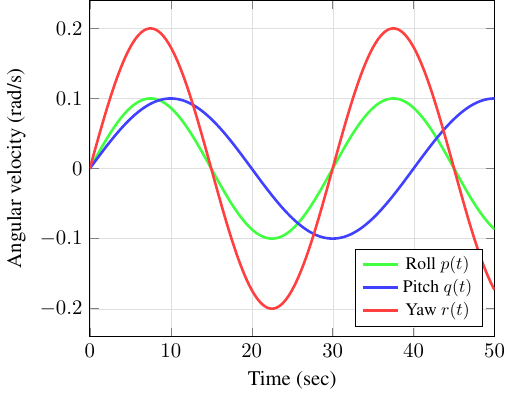}
    \caption{Platform body motion}
    \label{fig:platform_motion}
\end{figure}

\begin{example}
    As already discussed, to achieve stabilization of sensor LOS, the output angular velocities of the pitch channel, $q_a$ and of the yaw channel, $r_a$ are driven to zero. These responses can be seen in Figure \ref{fig:stabilization_plots}. The control parameters used for this case are $c_1=3$ and $c_2=4$. Another case with the presence of noise has been considered. Using the same control parameters however did not yield a stabilizable result. But, by cranking up the control parameters to $c_1=20$ and $c_2=16$, stabilization of sensor LOS can be achieved as shown in Figure \ref{fig:stabilization_plots_noise}.
\end{example} 

\begin{figure}[h]
    \centering
    \begin{subfigure}[h]{0.2\textwidth}
        \includegraphics[scale=0.5]{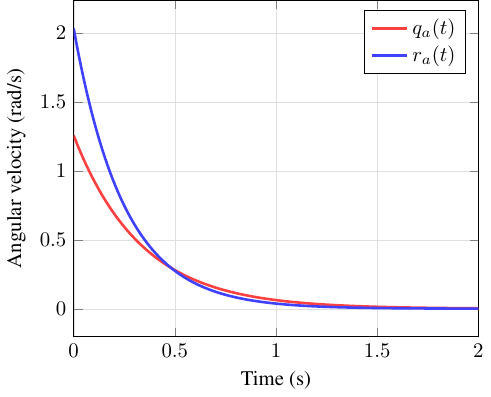}
        \caption{Angular velocities without noise}
        \label{fig:stabilization_plots}
    \end{subfigure}  
    \hspace{20pt}
    \begin{subfigure}[h]{0.2\textwidth}
        \includegraphics[scale=0.5]{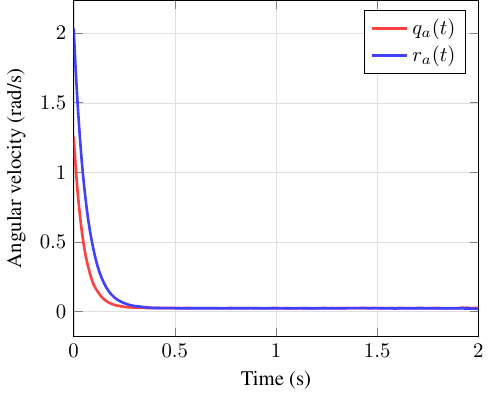}
        \caption{Angular velocities with noise}
        \label{fig:stabilization_plots_noise}
    \end{subfigure}    
    \caption{Angular velocity behaviour during stabilization}
\end{figure}

\begin{example}
    Consider a situation where the desired trajectories for the azimuth and elevation angles are given by step functions. The aim of the azimuth and elevation angles is to reach a step value of $\frac{\pi}{6}$ and $\frac{\pi}{3}$ for a period of $20s$ before dropping back to the initial value. The behaviour of the sensor LOS orientation can be observed in Figures \ref{fig:track_cons_q} and \ref{fig:track_cons_r} in comparison with the desired. A comparison of the proposed controller with a standard PID controller can be made from Figures \ref{fig:track_const_qpid} and \ref{fig:track_const_rpid}. It is evident that while the PID controller manages to smoothly track the step signal, a significant time is required for the LOS orientation to settle at the desired trajectory which can be undesirable in many applications requiring fast and precise tracking. Further, the proposed controller works well with mechanical noise for the both the angles as shown in Figures  \ref{fig:track_cons_qn} and \ref{fig:track_cons_rn}. These responses have been obtained by a use of the control parameters: $c_1=6,\ c_2=8, c_3=9,$ and $c_4=10$.
\end{example}
\begin{figure}[h]
    \centering
    \begin{subfigure}[t]{0.2\textwidth}
       \includegraphics[scale=0.45]{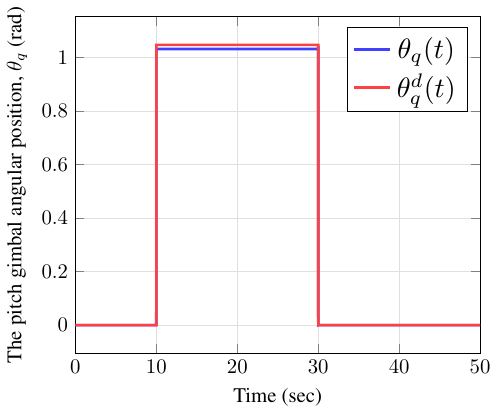}
        \caption{Elevation angle ($\theta_q$) without noise}
        \label{fig:track_cons_q}     
    \end{subfigure}
    \hspace{10 pt}
    \begin{subfigure}[t]{0.2\textwidth}
        \includegraphics[scale=0.45]{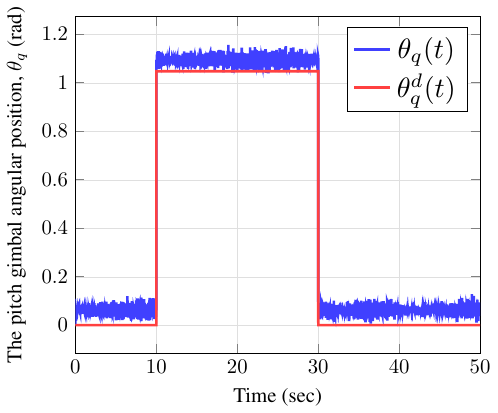}
        \caption{Elevation angle ($\theta_q$) with noise}
        \label{fig:track_cons_qn}
    \end{subfigure}\\
    \begin{subfigure}[t]{0.2\textwidth}
       \includegraphics[scale=0.45]{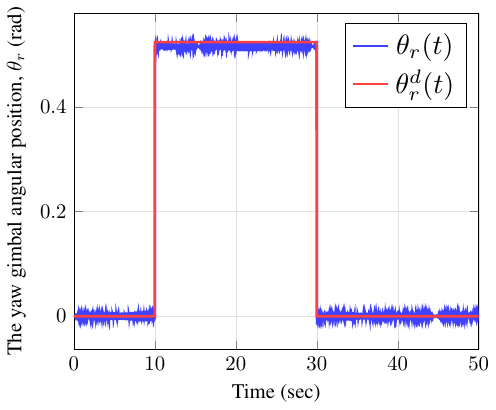}
        \caption{Azimuth angle ($\theta_r$) without noise}
        \label{fig:track_cons_r}     
    \end{subfigure}\hspace{10 pt}
    \begin{subfigure}[t]{0.2\textwidth}
        \includegraphics[scale=0.45]{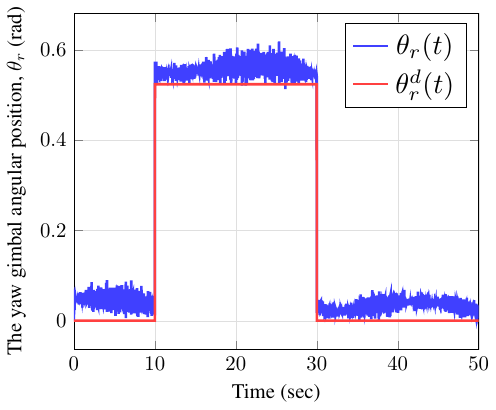}
        \caption{Azimuth angle ($\theta_r$) with noise}
        \label{fig:track_cons_rn}
    \end{subfigure} \\
     \begin{subfigure}[h]{0.2\textwidth}
    	\includegraphics[scale=0.45]{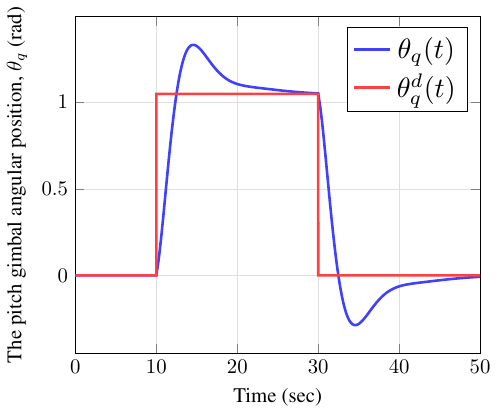}
    	\caption{Elevation angle ($\theta_q$)}
    	\label{fig:track_const_qpid}
    \end{subfigure}  
    \hspace{20pt}
    \begin{subfigure}[h]{0.2\textwidth}
    	\includegraphics[scale=0.45]{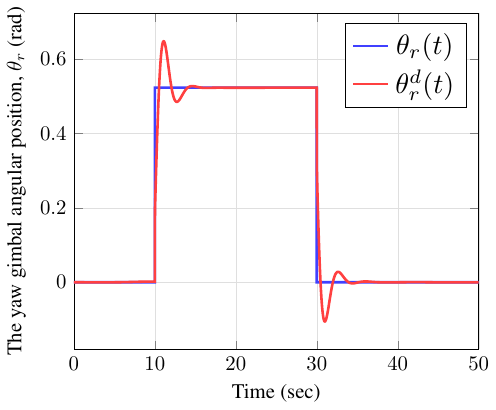}
    	\caption{Azimuth angle ($\theta_r$)}
    	\label{fig:track_const_rpid}
    \end{subfigure} \\
    \caption{Sensor LOS tracking a step signal}
\end{figure}

\begin{example}
    Consider a situation where the desired trajectories for the azimuth and elevation angles are given by sinusoidal functions. This desired trajectory is defined as $\sin(\frac{\pi}{25}t)$. The behaviour of the sensor LOS orientation can be observed in Figures \ref{fig:track_sin_q} and \ref{fig:track_sin_r} in comparison with the desired. A case considering mechanical noise for the both the angles are shown in Figure \ref{fig:track_sin_qn} and \ref{fig:track_sin_rn}. These responses have been obtained by a use of the control parameters: $c_1=8,\ c_2=10, c_3=6,$ and $c_4=8$.
\end{example}

\begin{figure}[h]
    \centering
    \begin{subfigure}[t]{0.2\textwidth}
       \includegraphics[scale=0.45]{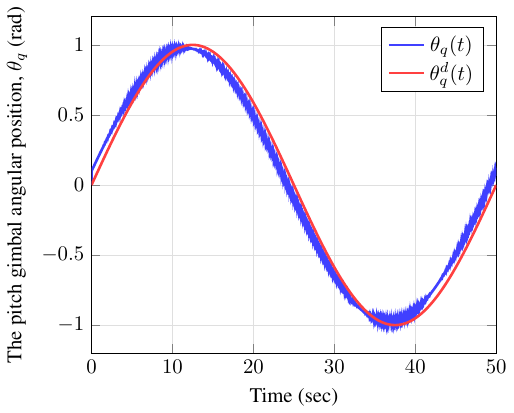}
        \caption{Elevation angle ($\theta_q$) without noise}
        \label{fig:track_sin_q}     
    \end{subfigure}
    \hspace{10 pt}
    \begin{subfigure}[t]{0.2\textwidth}
        \includegraphics[scale=0.45]{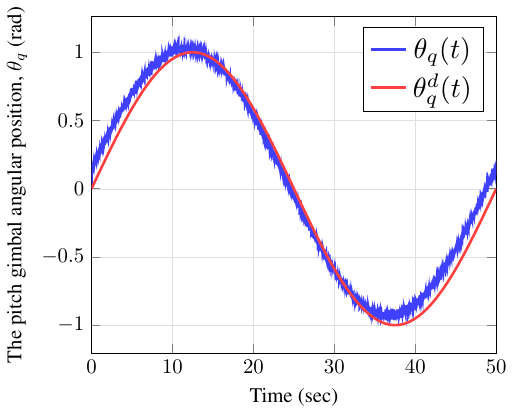}
        \caption{Elevation angle ($\theta_q$) with noise}
        \label{fig:track_sin_qn}
    \end{subfigure}\\
    \begin{subfigure}[t]{0.2\textwidth}
       \includegraphics[scale=0.45]{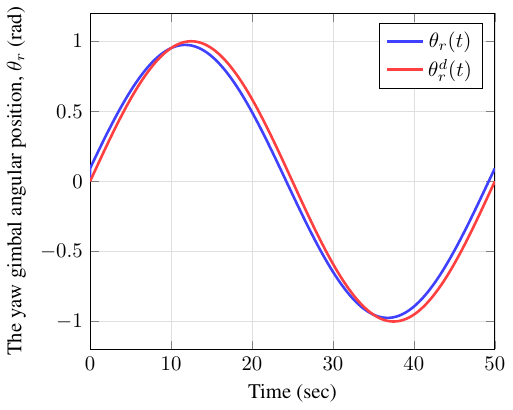}
        \caption{Azimuth angle ($\theta_r$) without noise}
        \label{fig:track_sin_r}     
    \end{subfigure}\hspace{10 pt}
    \begin{subfigure}[t]{0.2\textwidth}
        \includegraphics[scale=0.45]{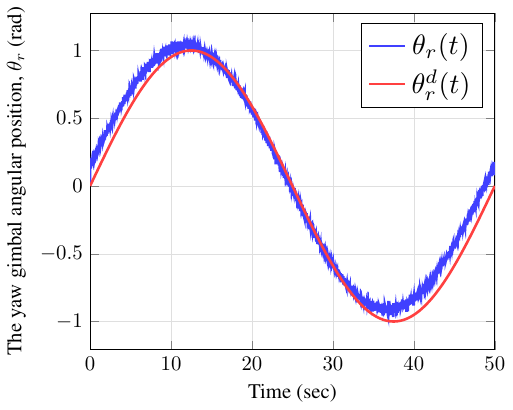}
        \caption{Azimuth angle ($\theta_r$) with noise}
        \label{fig:track_sin_rn}
    \end{subfigure}
    \caption{Sensor LOS tracking a sinusoidal signal}
\end{figure}

\section{Conclusion}
\label{sec:conc}

This paper considers a two-axes gimbal system with symmetry and no mass unbalance. A novel state space model is proposed to capture the system dynamics in a symmetric setting. Using a suitable change of variables, the nonlinear state space model is transformed into a linear time-invariant system. This transformation can also be attained even under asymmetric conditions, and symmetry has been assumed only to keep the expressions concise. Control laws are proposed for the transformed system that enable sensor LOS stabilization and tracking. Further, it has been shown that the control objectives are achieved exponentially fast. The 
 
\bibliographystyle{IEEEtran}
\bibliography{main}

\end{document}